\newtheorem{theorem}{Theorem}
\newtheorem{lemma}{Lemma}
\newtheorem{remarks}{Remarks}[section]
\begin{document}
\title{A Framework for Quality of Service with a Multiple Access Strategy}
\author{
\IEEEauthorblockN{Easwar Vivek Mangipudi, Venkatesh Ramaiyan}
\IEEEauthorblockA{Department of Electrical Engineering\\
Indian Institute of Technology Madras\\
Chennai 600036, India\\
Email: easwar.vivek@gmail.com, rvenkat@ee.iitm.ac.in}%\{ee10s045, rvenkat\}@ee.iitm.ac.in }
}
\maketitle

\begin{abstract}
We study a problem of scheduling real-time traffic with hard delay constraints
in an unreliable wireless channel.
Packets arrive at a constant rate to the network and have to be delivered within a fixed
 number of slots in a fading wireless channel.
For an infrastructure mode of traffic with a centralized scheduler,
we are interested in the long time average throughput achievable
for the real time traffic.
In \cite{prk1}, the authors have studied the feasible throughput vectors by
identifying the necessary and sufficient conditions using work load characterization.
In our work, we provide a characterization of the feasible throughput vectors
using the notion of the rate region.
We then discuss an extension to the network model studied in \cite{prk1} by allowing multiple access during contention and propose an enhancement to the rate region of the wireless network. We characterize the feasible throughput vectors with the multiple access technique and study throughput optimal and utility maximizing strategies for the network scenario.
Using simulations, we evaluate the performance of the proposed strategy and discuss its advantages.
\end{abstract}

\section{Introduction}
Provisioning quality of service for real time traffic in a wireless network is crucial and is gaining importance
with applications of voice and video streaming. They are also critical for sensor network applications
including real time surveillance and applications that involve network control. The critical feature in such applications is that they are delay constrained. While best effort traffic has minimal delay constraints, real time traffic can have strict constraints on delay including hard delay constraints, probabilistic or average delay constraints. In this paper, we consider
an unreliable, time varying wireless network and we study the problem of provisioning resources
for a real time traffic with hard delay constraints.

We consider a network model similar to that reported
in \cite{prk1} to support the QoS and extend it to a multiple access scenario.
In \cite{prk1}, the authors have proposed a framework to deal with delay, throughput and channel reliabilities. They propose a load based characterization and obtain necessary and sufficient conditions for a set of long term average throughput demands of the users to be feasible. They also proposed a simple admission control policy for the system.
We use a similar framework but characterize the system by obtaining the expected rate vectors achievable in each frame (to be defined later). We identify the necessary ergodic schedules for a frame and evaluate the average throughput using these schedules.
%We verify that these schedules indeed achieve the form the boundary of all the rate vectors achievable in a frame.
%We therefore obtain the set of all the rate vectors achievable in a period and hence, the Rate Region.

We then extend the framework in \cite{prk1} to a multiple access scenario which results in a larger rate region, implying better performance in-terms of throughput. In the extended model, the access point instead of polling the users (as in \cite{prk1}) does a multicast control packet exchange with the users at the beginning of every slot. We obtain the rate region for this extended framework by characterizing the achievable rate vector in each frame. We discuss admission control policies and study throughput and utility optimal schedulers for the extended network. Using simulations, we evaluate the performance of the proposed strategy and comment on its advantages.

\subsection{Literature Survey}
%{\color{red}
%Characterising the stability region/rate region of wireless systems and studying throughput optimal schedulers and their delay performance has been an area of great interest.
%The problems involving heterogeneous delay constraints, utility maximization, with minimum throughput requirements in such scenarios have been in discussed in works \cite{hou_2010}, \cite{het_delay_srikant} \cite{prk_utility_max}, \cite{hou_utility} respectively. Other works which look into similar or allied problems include \cite{neely_superfast} where the optimal order of delay for a certain arrival process has been studied. The problem of delay constraints in distributed networks has been studied in \cite{fastcsma_eryilmaz}.  Finding the rate region of the system and designing an online scheduler to achieve the same in a partial channel state information scenario has been dealt with in \cite{aditya} and the problem of supporting both elastic and inelastic traffic has been discussed in \cite{jaramillo_2010}. Problems that involve improving performance but in the context of video packet scheduling have been looked into, in works like \cite{kang2002,qiong}. In \cite{ganz2003} the authors looked at scheduling stratergies for providing QoS for different classes of users. The capacity for a multicast session in a adhoc scenario has been studied in \cite{zhou_multicast}
%}
Supporting QoS in wireless networks with emphasis on delay constraints has been an area of great interest in recent years.
In \cite{prk1}, Hou et al, propose a QoS framework for real time traffic that characterizes the feasible throughput region with hard delay constraint. 
In \cite{hou_2010}, the framework was extended to study heterogeneous delay constraints. In our work, we extend the framework studied in \cite{prk1} and \cite{hou_2010} to the multiple access scenario using a rate region viewpoint for the homogeneous delay model. The framework studied in \cite{prk1} and in this work concerns with an infrastructure setup with uplink/downlink traffic and hard delay constraints. References such as \cite{het_delay_srikant}, \cite{neely_superfast} consider generalizations for ad hoc wireless networks.
Scheduling delay constrained packets with modified earliest due date policy has been looked at in works like \cite{shakkottai_2002,elsayed_2006}.
The tradeoff between the average throughput and the average delay of the packets in the network is studied in \cite{neely_superfast}.
In \cite{exponential}, the authors propose an average delay optimal scheduling strategy for multiple flows sharing a time varying channel.
In \cite{fastcsma_eryilmaz}, Li and Eryilmaz propose a fast-CSMA algorithm for deadline constrained scheduling for a distributed wireless channel.
References such as \cite{prk_utility_max} and \cite{hou_utility} study utility maximization strategies for the network.

There are a number of interesting references focusing on QoS provisioning based on current implementations and standards.
In \cite{ganz2003}, Wongthavarawat and Ganz propose a packet scheduling strategy for QoS support in IEEE 802.16 broadband wireless access systems.
In the context of video streaming, packet scheduling algorithm that apply different deadline thresholds to packets has been proposed  in \cite{kang2002}.

\section{Network Model}
We consider an infrastructure wireless network setup with a base station or an access point and a fixed number, $N$, of wireless users. Time is assumed to be slotted and is grouped into frames of $\tau$ slots each as shown in Figure~\ref{fig:network_model}. Each user is assumed to generate a packet of fixed size at the beginning of every frame (uplink traffic scenario). The packets have a strict delay constraint and has to be delivered to the base station within the frame; the users discard the packets at the end of the frame.
\begin{figure}
\begin{center}
\includegraphics[scale=0.45]{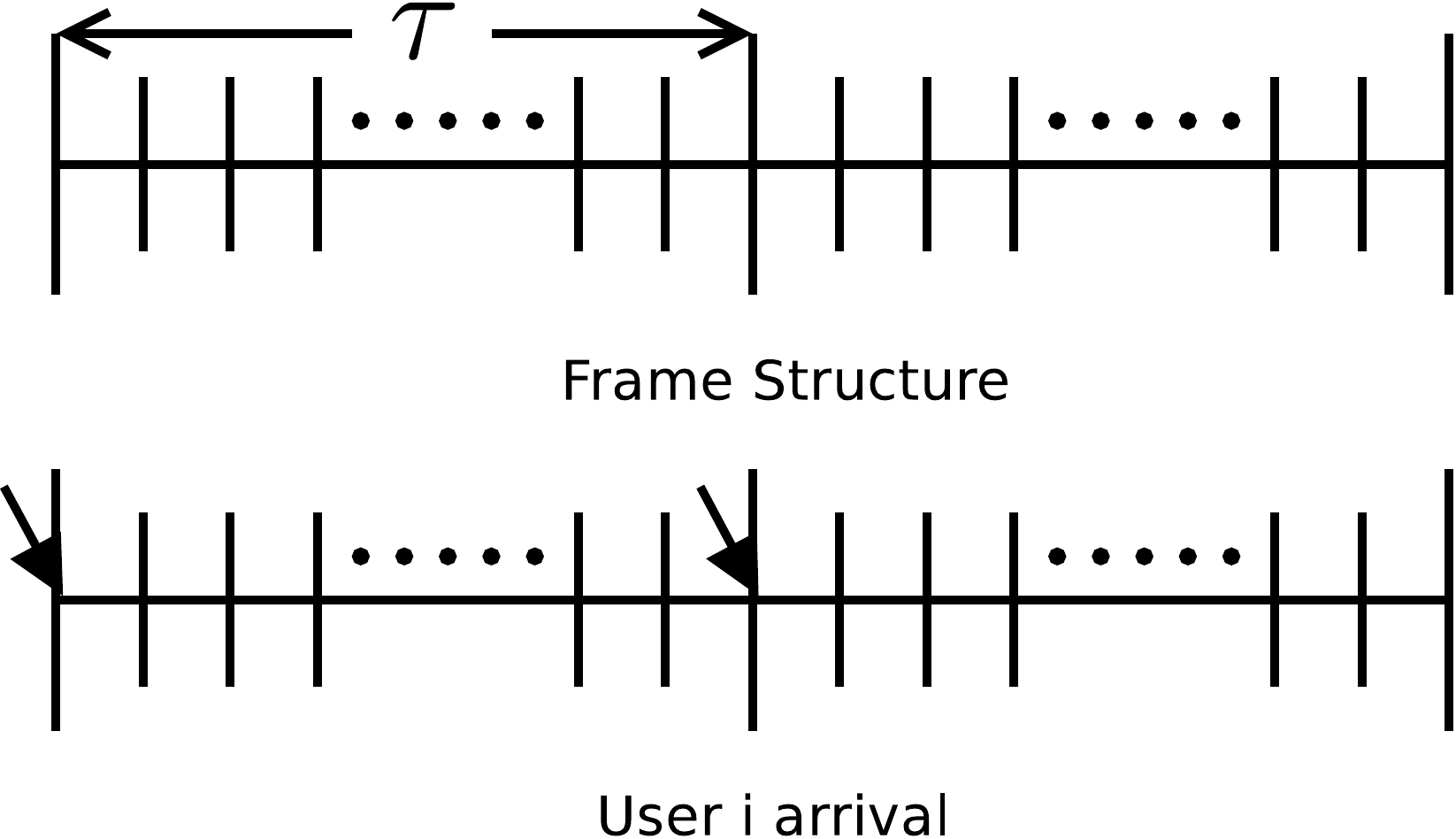}
\caption{Slot structure, frame structure and arrival process for the wireless network.}
\label{fig:network_model}
\end{center}
\end{figure}

We consider an unreliable wireless channel between the access point and the users and the users and the access point (symmetric channel assumption), where the channel is ON with probability $p_i$ for any user $i$ (and is OFF with probability $1-p_i$) in every slot. The channel is assumed to be constant in a time slot and is assumed to be independent across time and users. We assume a centralized scheduling strategy where a single user is successfully scheduled in a slot by the base station or the access point. We assume that the
data packet is transmitted only after successful negotiation initiated by the base station. We assume that the slot duration is long enough to accommodate the transmission of control, data and the acknowledgement packets.
%The framework can also be used to study downlink transmissions using a RTS/CTS control packet exchange.
A similar framework was studied in \cite{prk1} where the access point polls a single user in every slot. In this work, we extend the framework to include multiple access during contention (discussed in detail in section IV).

The performance metric of interest in this work is the long time average throughput of a user.  The long time average throughput of user $i$ is defined as
\[ d_i = \liminf_{t \rightarrow \infty} \frac{1}{t} \sum_{k=1}^t I_i(k) \]
where $I_i(t)$ is the indicator function indicating successful transmission for any user $i$ in slot $t$. Let $(d_1, d_2, \cdots, d_N)$ be a feasible long time average throughput vector, i.e., there exists some scheduling strategy,
possibly non-causal, that can achieve this long time average rate vector. Then, the rate region of the wireless network,
${\mathcal C}$ is defined as the set of all feasible throughput vectors $(d_1, d_2, \cdots, d_N)$. The network objective considered
in this work is to identify a scheduler that can support any feasible throughput vector for the wireless network and to maximize
a network utility on the throughput vectors.

\section{Rate Region of the wireless network}
\label{Sec3}
In \cite{prk1}, Hou et al have studied the problem of scheduling users with hard delay requirements
when a single user is polled in a slot. The user is scheduled in the slot if the channel is ON for the polled user.
They characterize the rate region of the wireless network using the load constraints on the network.
In Lemma~5 in \cite{prk1}, the authors show that an N-tuple ($d_1, d_2, \cdots, d_N$) is a feasible throughput vector
if
\[  \sum_{i \in {\mathcal S}} \frac{d_i}{p_i \tau} \leq 1 - {\mathsf E}[I_{{\mathcal S}}] \]
where ${\mathcal S}$ is any subset of $\{ 1,2,\cdots,N \}$ and ${\mathsf E}[I_{{\mathcal S}}]$
is the expected number of idle slots in a frame when the set of ${\mathcal S}$ users are scheduled.
In this work, we like to characterize the rate region directly by identifying the average
rate vector feasible in every frame.

The capacity of a wireless channel
has been studied in a number of works for a variety of network scenarios
(see \cite{maxwt}, \cite{neely_superfast}, \cite{aditya}). We note that the capacity of the delay constrained
wireless network can be similarly defined by identifying that the channel in every frame
is i.i.d. The average rate vector in every frame can be computed by identifying the set of
schedules possible in a frame and by computing the rate vector associated with the schedule.
As the ON probabilities of the channels are identical across slots and frames, the expected throughput obtained for any frame would be same as that of the long term average throughput.

%Then, from \cite{stolyar_maxwt}, we note
%that a rate vector $(d_1, \cdots, d_N)$ is a feasible rate vector if there exists
%a set of schedules such that
%\[ d_i = \sum_{S_k \in S} \]
%In this model, a schedule refers to the order in which slots are allotted to each user for transmission. Slots can be allocated to a new user either after the success of the previous user or even before. Here, the traffic and the channel is i.i.d. in every frame, hence we can restrict to Ergodic schedules
%in a frame.

In this work, we propose to consider the following set of schedules for the wireless network. Given any subset
of users $\{ i_1, i_2, \cdots, i_k\}$, consider all possible permutations of the users where a permutation is defined as
$i_1/i_2/\cdots/i_k$. The schedule $i_1/i_2/\cdots/i_k$ corresponds to the strategy of scheduling user $i_1$ until success and then user $i_2$ until success and so on within a frame. We note that there are $k!$ such schedules with a given set of $k$ users and there
are $^{N}C_k$ such combinations for $k$ users. Further, $k$ can be from $0$ to $N$. The following theorem summarizes the sufficiency of such schedules for the wireless network.
\begin{theorem}
The total number of ordered schedules for a network with $N$ users is $\sum_{k=0}^N { }^{N} P_k$. The convex hull of the expected rate vectors of the set of ordered schedules is the set of feasible throughput vectors for the wireless network.
\label{thm:1}
\end{theorem}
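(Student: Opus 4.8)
\emph{Proof plan.}
The cardinality claim is a one‑line count: an ordered schedule is exactly an ordered tuple $i_1/i_2/\cdots/i_k$ of $k$ distinct users for some $0\le k\le N$, the empty tuple being the ``serve nobody'' schedule, and there are ${}^{N}P_k$ tuples of length $k$, so the total is $\sum_{k=0}^{N}{}^{N}P_k$. The substance is the second claim, which I would prove as two inclusions: every convex combination of expected rate vectors of ordered schedules is a feasible throughput vector (achievability), and every feasible throughput vector arises this way (converse). Achievability is routine; the converse is the core.

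\emph{Reduction to a single frame.}
Because the channel is i.i.d.\ across frames and each packet must be delivered within its own frame, the scheduling problem decouples frame by frame. Let $\mathcal{R}$ be the set of expected per‑frame rate vectors attainable by an in‑frame policy; since the scheduler may randomize, $\mathcal{R}$ is the convex hull of the finitely many rate vectors of deterministic in‑frame policies, hence a polytope. I would first argue $\mathcal{C}=\mathcal{R}$: applying a fixed in‑frame policy in every frame attains its expected rate vector in the long run (law of large numbers), and finite convex combinations are attained by time‑sharing over frames, so $\mathcal{R}\subseteq\mathcal{C}$; conversely, an arbitrary policy restricted to frame $m$ is an in‑frame policy applied to channel realizations that are independent of the past, so the conditional expected rate of frame $m$ given the past lies in $\mathcal{R}$, and a martingale/averaging argument together with the downward‑closedness of $\mathcal{R}$ in the positive orthant forces the long‑run time average into $\mathcal{R}$. (Non‑causality here can only concern the deterministic arrival/frame data, so it adds nothing.)

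\emph{The vertices of $\mathcal{R}$ are the ordered schedules.}
Each ordered schedule is an in‑frame policy, so $\mathrm{conv}\{\text{ordered-schedule rate vectors}\}\subseteq\mathcal{R}$. For the reverse inclusion it suffices to show every extreme point of the polytope $\mathcal{R}$ is the rate vector of an ordered schedule; equivalently, that for every weight vector $w$ the program $\max_{d\in\mathcal{R}}\sum_i w_i d_i$ is solved by an ordered schedule. This is the in‑frame problem of choosing, in each slot, one not‑yet‑served user to poll --- with no channel state revealed until the poll succeeds or fails --- so as to maximize $\sum_i w_i\,{\mathsf P}[\text{user }i\text{ served within the frame}]$. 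Two facts drive it: (i) a failed poll is statistically uninformative, so an optimal policy may be taken to depend only on the current slot and the set of users already served; and (ii) a still‑unserved user $i$ then behaves like an item every poll of which succeeds independently with probability $p_i$ for reward $w_i$, hence carries a constant priority index $w_i p_i$, and an interchange argument --- swap two users served consecutively but out of index order; by a memorylessness coupling the distribution of everything before and after the swapped pair is unchanged, and the swap shifts the objective by a nonnegative multiple of the index difference --- shows that always polling an unserved user of largest index is optimal. Such a policy serves its users in a fixed order, i.e., it is an ordered schedule (a truncated one when some $w_i\le 0$). Hence every vertex of $\mathcal{R}$ is an ordered‑schedule rate vector, so $\mathcal{R}=\mathrm{conv}\{\text{ordered-schedule rate vectors}\}=\mathcal{C}$.

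\emph{Main obstacle.}
The delicate point is carrying (ii) past the \emph{hard frame deadline}. The interchange argument reorders users within a fixed service order, but one must additionally rule out genuinely time‑dependent play: a priori, with a deadline, the optimal priority order could depend on how many slots remain, and in particular it must be shown that ``abandoning'' a still‑unserved user before the deadline is never strictly beneficial. Establishing this monotonicity --- e.g., by backward induction on the slot index, checking that the optimal action at each state is the index‑greedy one --- is what collapses every optimal in‑frame policy onto an ordered schedule and so pins down the vertices of $\mathcal{R}$.
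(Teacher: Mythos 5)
Your plan is correct, but it takes a genuinely different route from the paper. The paper's own proof is only an outline and it leans on the external load-based characterization of \cite{prk1}: after restricting to ergodic schedules (frames are i.i.d.), it notes that every ordered schedule satisfies each constraint $\sum_{i \in {\mathcal S}} \frac{d_i}{\tau p_i} \leq 1 - {\mathsf E}[I_{\mathcal S}]$, and then asserts that the permutations of ${\mathcal S}$ supply the corner points of the face cut out by that constraint, so the convex hull of the ordered-schedule rate vectors coincides with the polymatroid of feasible vectors from \cite{prk1}. You instead argue from first principles: the frame-decoupling step identifies the long-run region with the per-frame region ${\mathcal R}$, and the extreme points of ${\mathcal R}$ are pinned down by showing that every linear objective $\sum_i w_i d_i$ over in-frame policies is maximized by a fixed-priority (ordered) schedule via the index $w_i p_i$. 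What your route buys is self-containedness: it does not presuppose the sufficiency half of the conditions in \cite{prk1}, and it supplies exactly the justification the paper only asserts, namely that ordered schedules exhaust the vertices/faces; what the paper's route buys is brevity, at the price of importing that result. The step you flag as the main obstacle is real but does go through as you sketch: for an adjacent interchange of users $i,j$ with $t$ slots remaining when the pair is reached, the change in expected reward equals $(w_i p_i - w_j p_j)\sum_{s=0}^{t-1}(1-p_i)^s (1-p_j)^{t-1-s}$, the sum being symmetric in $i,j$, while the time to clear both users has the same law in either order, so the suffix is unaffected; combined with backward induction on the remaining horizon (a failed poll leaves the state unchanged except for the horizon, and the index does not depend on the horizon), this collapses optimal in-frame policies onto ordered schedules, truncated at users with nonpositive weight. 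So the proposal is sound as a proof strategy; to be complete it needs that induction written out, but in fairness it would then be more rigorous than the outline the paper itself provides.
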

\begin{proof}
Here, we will provide an outline of the proof.

The channel and traffic conditions are i.i.d. in every frame. Hence, it is sufficient to restrict to the Ergodic schedules. Consider any ordered schedule from a combination ${\mathcal S} \subset \{1, 2, \cdots, N\}$. Clearly, the throughput achieved using any ordered schedule satisfies the necessary condition $\sum_{i \in {\mathcal S}} \frac{d_i}{\tau p_i} \leq 1 - {\mathsf E}[{\mathcal S}]$. Hence, the convex hull of the throughput vectors achieved using the ordered schedules is a subset of the rate region described in \cite{prk1}. The rate vectors corresponding to the ordered schedules lie in the intersection of such planes described by the necessary conditions in \cite{prk1}. The face of the rate region described by the plane $\sum_{i \in {\mathcal S}} \frac{d_i}{\tau p_i} \leq 1 - {\mathsf E}[{\mathcal S}]$ is constrained by the rate vectors achievable by the ordered schedules that are permutations of ${\mathcal S}$. Hence, the convex hull of the rate vectors achieved through the ordered schedule is in fact the feasible throughput vectors described in \cite{prk1}.
\end{proof}

%\begin{lemma}
%\label{lemma_1}
%The total number of Ergodic schedules  for the $N$ user system is
%\begin{equation}
% \sum_{r=0}^{N} { ^{N} P_r}
% \label{no_of_pts}
%\end{equation}
%\end{lemma}
%Assume N users $(1,2,3...N)$ in the system. Now, any set of users can be scheduled in any order (sequentially) in a frame. Hence the total number of schedules possible in a frame is given by that in  \ref{lemma_1}
\begin{remarks}
These set of points described in the Theorem~\ref{thm:1} are the different corner points of the rate region which is now a $N$ dimensional polymatroid (\cite{polymatroid}).
We compute the expected rate vector obtained for a schedule in a frame similar to that computed for a slot in \cite{neely_delay} to obtain the average throughput vector. The rate region has $\sum_{k=0}^N \ ^{N}P_k$ corner points (or schedules). Each of the necessary conditions discussed in \cite{prk1} corresponds to a face on the rate region and hence, there are $N + \sum_{k=1}^N \ ^{N}C_r$ faces on the rate region (including the non-negative constraints for the average throughput vector).
\end{remarks}
%(\ref{lemma_1}) is the number of corner points excluding the origin.
%\begin{lemma}
%The minimalistic set of points, with the convex combination of which we can obtain the whole rate region are $N!$ in number
%\end{lemma}

%Each such permutation would result in a particular expected rate vector in a frame and hence a particular point on the rate region. Let $S$ be the set of schedules feasible for a given channel.
%Any number of users can be scheduled in a frame and in any order.
%Each order of any K $(K\in {1,2,3 \cdots N} )$ is a schedule and would achieve an expected rate vector for each frame. All the permutations of all the subsets of the users would form the possible set of schedules in a slot.
%If S' is the schedule in any frame , then S' $\in S_k $ for any k. $ S_k \in S:  k \in{1,2, \cdots N}  $ and $S_k $ is the set of all permutations of the $k$ selected users.

%These are obtained by considering the schedule points where all the $N$ users are scheduled. These $N!$ points represent the dominant face of the rate region.

We will now illustrate the idea using an example with two users.
%\subsection*{Rate Region of Two User Example}
The rate region of a two user network $N = 2$ can be obtained by considering the following Ergodic schedules possible in a frame: allocate all the slots to user one (indicated as $1$), to user two ($2$), allocate slots in a frame to user one till the user succeeds and then to user two ($1/2$) and vice versa ($2/1$). In Figure~\ref{fig_rr_comp_2usr}, we have plotted the throughput vectors corresponding to the four schedules $\{ 1,2,1/2,2/1 \}$. The convex hull of the above throughput vectors along with the $(0,0)$ throughput vector is the rate region of the wireless network.

%These points are indicated in the figure.
%The analytical expressions for the points $(1/0)$ and $(1/2)$ have been given below. The expression for $(2/0)$ and $(2/1)$ can be obtained by interchanging $p_1$ and $p_2$ in the expressions for $(1/0)$ and $(1/2)$
%\begin{equation}
%\small (1/0): ((1-(1-p_1)^T,0)
%\end{equation}
%\begin{equation}
%\small (1/2): \big(1-(1-p_1)^T,\sum_{k=1}^T (1-p_1)^{k-1}p_1(1-(1-p_2)^{T-k})\big)
%\end{equation}
%It can be observed that these points indeed satisfy the necessary and sufficient conditions mentioned in \cite{prk1}. Hence these points form the boundary of the rate region.

\begin{figure}
\begin{center}
\includegraphics[scale=0.54]{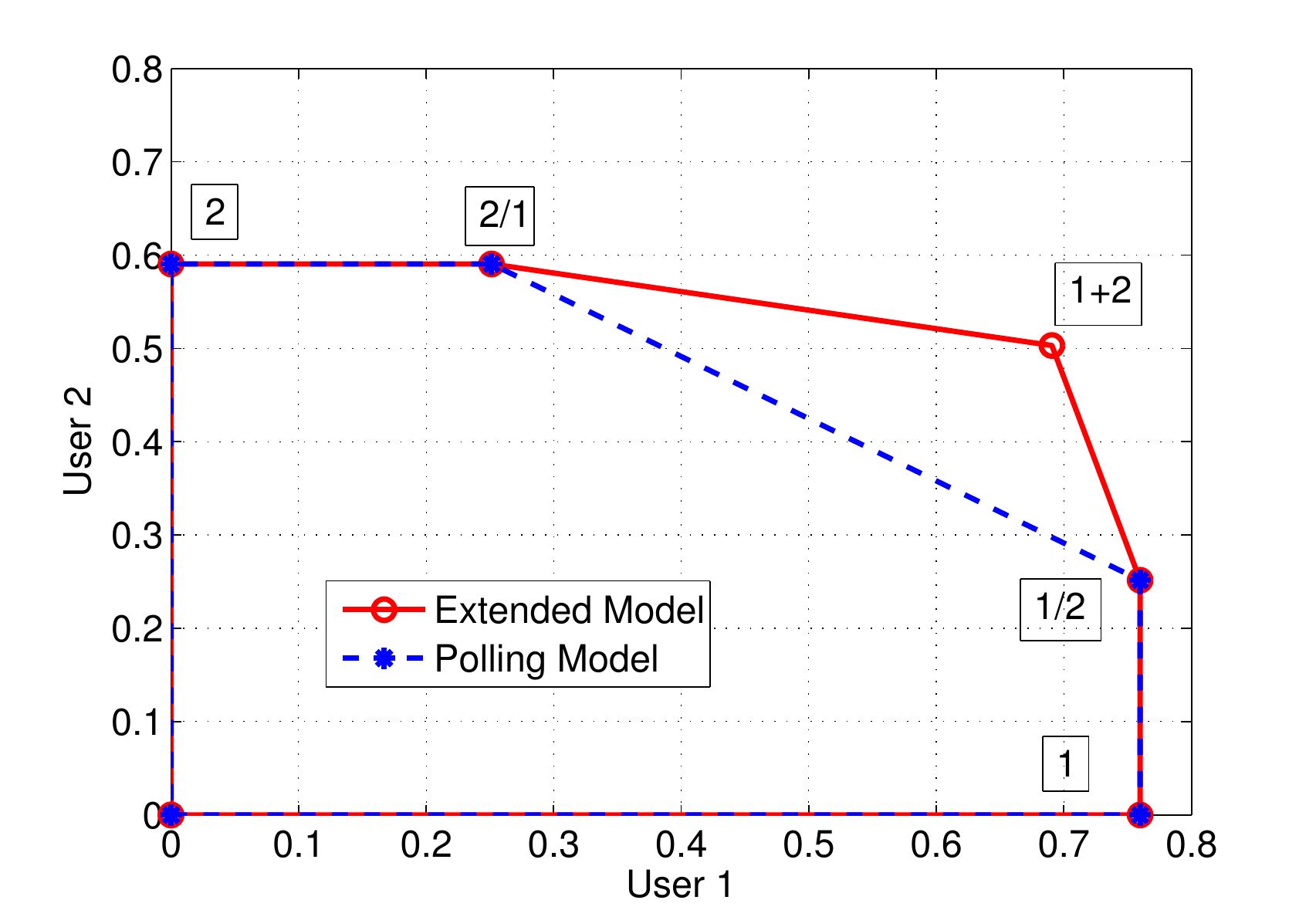}
\caption{Rate regions of the polling model (from \cite{prk1}) and the extended models for a wireless network with $N = 2$ users with $p_1=0.3$, $p_2=0.2$ and $\tau=4$ slots. We have indicated the schedule points for the polling model as well as the extended network model.}
\label{fig_rr_comp_2usr}
\end{center}
\end{figure}

%\subsection{Admission Control}
%An admission control strategy with an ordering of user requriments has been proposed in \cite{prk1}. The ordering given there can be interpreted in terms of Rate Region as different planes $\{(q_1=q_2),((q_1=q_2=q_3),\cdots, \}$ cutting the rate region into a smaller region. The schedule points which fall in this region fall on exactly $N$ faces which will be the number of conditions to be checked for the presence of any given rate vector inside the constrained region. Hence the number of conditions reduces to $N$ as discussed in \cite{prk1}

\section{Extended network model}
In this section, we will propose an extension to the network model studied in \cite{prk1} by allowing multiple access during the contention phase. In the polling model studied in \cite{prk1}, at the beginning of a slot, the base station polls a single user using a control packet. The user responds with a data packet if the channel is ON and the slot is wasted if the channel is OFF. We recommend that at the beginning of a slot, the base station shall broadcast a multicast control packet to the users (addressing a subset of the users). The users with the ON channel respond to the multicast control packet with a data packet (for the uplink traffic) if the control packet is addressed to the user. If the base station observes a single successful packet, the user which responded in the slot is deemed scheduled in the slot.
%In case of the uplink traffic, the base station responds with an acknowledgement and for the downlink traffic, the base station responds with a data packet followed by an acknowledgement from the user.
%
%Now we extend the Network Model to include an RTS and a CTS at the beginning of every slot which is assumed to be long enough to accommodate a RTS, CTS and the time taken for one complete packet transmission along the acknowledgement. At the beginning of each slot, the base station broadcasts a RTS to the users. The users with an ON channel respond with a CTS \cite{oppor_fb}. If the base station observes a single success, the user which responded is scheduled in the time slot. This utilises the opportunism among the users .
%
The multicast contention packet permits multiple access in a slot. If the base station notices a collision or if there is no transmission in the slot, the slot is considered wasted.

In \cite{prk1}, the base station addresses a single user in a slot. We propose that the base station shall address a subset of users that needs to be scheduled in a slot instead. We would expect that the subset chosen shall aim to maximize the network objective in the given slot.
%If it encounters a collision during the CTS phase, the slot is wasted. Instead of broadcasting to all the users, the base station can choose a subset of users (like in \cite{aditya}) and multicast the RTS to them to minimise the probability of collision in the slot.
We refer to the network model which permits multiple access in contention as the extended network model. Our aim is to characterize the feasible throughput vectors for the extended network model, identify admission control policy and propose stabilizing and utility optimal schedules as well. We note here that the load based characterization (proposed in \cite{prk1}) does not lead to simple characterization of the feasible throughput vectors; however, we can define the feasible throughput vectors using the rate region viewpoint as illustrated in the example below.

\subsection*{Rate Region of a Two User Example}
Consider the two user network $N = 2$ with frame size $\tau = 4$ slots and with channel probabilities $p_1 = 0.3$ and $p = 0.2$ studied in Figure~\ref{fig_rr_comp_2usr}. In Section~\ref{Sec3}, we noted that the set of efficient schedules permitted with a polling model for contention are $\{ 1, 2, 1/2, 2/1 \}$. The multicast model for contention permits additional schedules such as $(1+2)$, where users $1$ and $2$ are scheduled simultaneously in a multicast combination until a success and $(1+2)^c$ where the user(s) that is not scheduled so far in the multicast combination $(1+2)$ competes until a success.
%In the extended model for a two user system, consider the schedule in which both the users are sent the RTS at the beginning of every slot (denoted by $(1+2)$) . If the channel of only one user is ON, his CTS will be successfully received at the base station and he will be scheduled in the slot. The slot is wasted when both the users have an ON channel and their CTSs collide. If both the users see a OFF channel, the  slot is wasted by default. This schedule also results in a point $(1+2)$ on the rate region.

The user $1$ throughput with the schedule $(1+2)/(1+2)^c$ is
\begin{equation}
\begin{split}
&\sum_{k=1}^T \Big[ ((1-p_2)(1-p_1)+p_1p_2)^{k-1}(1-p_2)p_1+ \\ & \sum_{l=1}^{k-1} ((1-p_2)(1-p_1)+p_1p_2)^lp_2(1-p_1)^{k-l}p_1 \Big]
\end{split}
\end{equation}
and the user $2$ throughput with the schedule is
\begin{equation}
\begin{split}
&\sum_{k=1}^T \Big[ ((1-p_2)(1-p_1)+p_1p_2)^{k-1}(1-p_1)p_2+ \\ & \sum_{l=1}^{k-1} ((1-p_2)(1-p_1)+p_1p_2)^lp_1(1-p_2)^{k-l}p_2 \Big] %\sum_l=0^k-2 \left(
\end{split}
\end{equation}
The convex hull of the throughput vectors of the schedules $\{ 1,2,1/2,2/1,(1+2)/(1+2)^c \}$ gives the set of all feasible throughput vectors for the two user example. In Figure~\ref{fig_rr_comp_2usr}, we have plot the throughput with the extended network model illustrating the enhancement in the throughput in comparison with the polling network model.

The following lemma identifies a condition that guarantees improvement in the rate region with the multiple access technique.
\begin{lemma}
In a $N$ user system, a $k$ user multicast combination will enhance the network rate region if $ p_i \leq \frac{1}{k}$ for all the $k$ users.
\end{lemma}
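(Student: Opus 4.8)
The plan is to show that a $k$-user multicast action produces a throughput vector lying outside the polling rate region $\mathcal{C}$ of \cite{prk1}. By the polymatroid characterization recalled in Theorem~\ref{thm:1}, $\mathcal{C}$ is exactly the set cut out by the face constraints $\sum_{i\in\mathcal{S}}\frac{d_i}{p_i\tau}\le 1-\mathsf{E}[I_{\mathcal{S}}]$, and each right hand side is at most $1$ since the expected idle time is non-negative. Hence it suffices to exhibit a multicast schedule on the set $\mathcal{S}=\{i_1,\dots,i_k\}$ of the $k$ addressed users whose rate vector satisfies $\sum_{i\in\mathcal{S}}\frac{d_i}{p_i\tau}> 1-\mathsf{E}[I_{\mathcal{S}}]$; equivalently, to show that a multicast slot accrues ``normalized load'' faster than a polling slot.

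First I would do the per-slot accounting. Let the schedule send, in every slot, a multicast control packet to the users of $\mathcal{S}$ whose packet is still undelivered and schedule the unique user (if any) whose channel is ON. While all of $\mathcal{S}$ is active, $\mathsf{E}[I_{i_j}(t)]=p_{i_j}\prod_{l\neq j}(1-p_{i_l})$, so the expected increment of $\sum_{i\in\mathcal{S}}\frac{I_i(t)}{p_i}$ in such a slot is
\[
\rho(\mathcal{S}):=\sum_{j=1}^{k}\frac{p_{i_j}\prod_{l\neq j}(1-p_{i_l})}{p_{i_j}}=\sum_{j=1}^{k}\prod_{l\neq j}(1-p_{i_l}),
\]
whereas in any polling slot the same increment is $1$ on a success and $0$ on an idle slot, i.e.\ at most $1$. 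Consequently --- modulo the finite-frame bookkeeping discussed below --- the multicast schedule attains $\sum_{i\in\mathcal{S}}\frac{d_i}{p_i\tau}\ge\rho(\mathcal{S})$, and by the paragraph above it lies strictly outside $\mathcal{C}$ whenever $\rho(\mathcal{S})>1$.

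It remains to prove $\rho(\mathcal{S})\ge 1$ under the hypothesis $p_{i_j}\le\frac1k$ for every $j$. Each summand $\prod_{l\neq j}(1-p_{i_l})$ is non-increasing in every $p_{i_l}$, hence so is $\rho(\mathcal{S})$; it is therefore minimized over the box $\{\,p_{i_j}\le\frac1k\,\}$ at $p_{i_1}=\dots=p_{i_k}=\frac1k$, where $\rho(\mathcal{S})=k\bigl(1-\tfrac1k\bigr)^{k-1}$. A short check shows $k\bigl(1-\tfrac1k\bigr)^{k-1}\ge 1$ for every integer $k\ge2$: the ratio of consecutive values is $\bigl(k^2/(k^2-1)\bigr)^{k-1}>1$, so the expression is increasing in $k$ and equals $1$ at $k=2$. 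This also pins down $k=2$, $p_{i_1}=p_{i_2}=\tfrac12$ as the only boundary case, where the multicast rate vector merely touches a face of $\mathcal{C}$; for $k\ge3$ or any $p_{i_j}<\frac1k$ the inclusion is strict, so the rate region is genuinely enlarged.

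The step I expect to be the main obstacle is making the per-slot comparison rigorous inside the finite-frame model, where each user holds a single packet per frame and a multicast schedule that clears $\mathcal{S}$ quickly leaves idle slots at the tail of the frame. The clean way to close this gap is to work directly with the polymatroid: verify that the recursive schedule $(i_1+\dots+i_k)/(i_1+\dots+i_k)^c/\dots$ violates the $\mathcal{S}$-face inequality of \cite{prk1}, which after the same normalization reduces again to $\rho(\mathcal{S})>1$; an equivalent route is to show that, under $p_{i_j}\le\frac1k$, this schedule delivers the $k$ packets in stochastically no more slots than any permutation schedule of $\mathcal{S}$, so that its rate vector dominates the corresponding corner point of $\mathcal{C}$ coordinatewise and hence escapes $\mathcal{C}$. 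Once that reduction is in place, the remaining work is just the elementary inequality above.
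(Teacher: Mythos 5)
Your proposal is correct and follows essentially the same route as the paper: you substitute the multicast rate point $\bigl(p_i\prod_{j\neq i}(1-p_j)\bigr)$ into the polling-model face constraint and reduce everything to the inequality $\sum_{i}\prod_{j\neq i}(1-p_j)\ge k\bigl(1-\tfrac1k\bigr)^{k-1}\ge 1$ under $p_i\le\tfrac1k$, which is exactly the paper's argument. The only differences are cosmetic: the paper handles the finite-frame issue by simply reducing to $\tau=1$ (where your per-slot accounting becomes exact), and it asserts without proof the monotonicity facts that you verify explicitly.
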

\begin{proof}
It is sufficient to prove for $\tau=1$ as the channel is i.i.d in each slot and frame.

%The rate region of a two user system that does not use opportunism and polls any user in a slot, when $T=1$ is the convex hull of the points
%$(p_1,0),(0,p_2)$.  These points for a 3 user system will be
%$(p_1,0,0),(0,p_2,0),(0,0,p_3)$. The rate region of the system is the region of the first quadrant lying towards the origin of the plane described by the three points.

Consider a wireless network with $N$ users and $\tau = 1$. A throughput vector $(d_1,d_2,\cdots,d_N)$ is feasible if it satisfies the following necessary condition.
\begin{equation}
\frac{d_1}{p_1}+\frac{d_2}{p_2}+\cdots+\frac{d_N}{p_N} \leq 1
\end{equation}

%The equation of the plane for K users will be
%\begin{equation}
%\sum_{i=1}^K \frac{x_i}{p_i} \leq 1
%\label{plane_equ}
%\end{equation}
%If it can be shown that when each $p_i \leq 1$ the point obtained by using opportunism among $K$ lies beyond the equation of plane in \ref{plane_equ}, the proof shall be  completed.

Without loss of generality, consider a multicast combination with the first $k$ users. Trivially, for the polling model, we require that $\sum_{i=1}^k \frac{d_i}{p_i} \leq 1$. The throughput vector achieved using multiple access technique for the $k$ users is
\begin{equation}
\big(p_1 \prod_{i\neq 1}(1-p_i),p_2 \prod_{i\neq 2}(1-p_i),.....,p_k \prod_{i\neq k} (1-p_i) \big )
\label{N_user_pt}
\end{equation}
Substituting the throughput vector (\ref{N_user_pt}) in the equation of the plane $\sum_{i=1}^k \frac{d_i}{p_i}$ and letting $p_i \leq \frac{1}{k}$, we get,
\begin{equation}
\sum_{i=1}^k \prod_{j\neq i} (1-p_j) \geq \sum_{i=1}^k \prod_{j \neq i} (1 - \frac{1}{k}) = k (1 - \frac{1}{k})^{k-1} \geq 1
\label{pt_val}
\end{equation}
where the inequality is strict for $k > 2$ or $p_i < \frac{1}{k}$. This implies that the rate achieved with the multiple access strategy is strictly beyond the plane characterizing the rate region of the polling network model.
\end{proof}

\subsection{Admission control and Sub-optimal Rate Region}
We can now define the rate region of the extended network model like we defined in Section~\ref{Sec3} for the polling model.
The channel and the traffic is i.i.d. in every frame, hence, we can restrict our attention to Ergodic schedules.
In Section~\ref{Sec3}, we limited our attention to schedules of the form $i_1/\cdots/i_k$. The extended network model permits such schedules as well and hence, the rate region of the extended network model is at least as big as the rate region of the polling model. Further, the extended network model permits different multicast combination of users as schedules in a slot which also necessitates the use of dynamic schedules adapted with the outcome of a multicast combination.

In Section~\ref{Sec3}, we characterized the rate region of the wireless network (studied in \cite{prk1}), as having $\sum_{i=0}^N  \ ^{N} P_r$ schedules or corner-points (with $N + \sum_{i=1}^N \ ^{N} C_r$ faces). Suppose that the desired throughput vector $(d_1, d_2, \cdots, d_N)$ is ordered $(d_{k_1}, d_{k_2}, \cdots, d_{k_N})$ such that $d_{k_1} \leq d_{k_2} \cdots \leq d_{k_N}$. Then, the number of original schedules (corner-points) with the same ordering is $\sum_{i=0}^N 1 = N + 1$ and the number of original faces with the same ordering is $N + 1$ (every original schedule in the restricted region induces a face in the section of the rate region). Thus, the number of conditions that ensures the feasibility of a given average throughput vector is $N$ ($+1$ assuming the rate is non-negative). In \cite{prk1}, the authors have essentially proposed a simple admission control policy for the feasibility of an average throughput vector based on the idea.
The extended network model, however, has many more schedules and faces in comparison with the polling model. For example, for a $N = 3$ user wireless network, the number of schedules and faces (constraints) for the extended network model is $28 (27+1) $ and $24(21+3)$ (much larger than the polling model). 
Hence, we suggest to consider only a few of the multi-access schedules to improve the rate region and thus reduce the complexity of the schedule and the admission control policy. 

In Figure~\ref{figure_rrsuboptimal}, we plot the set of feasible throughput vectors achieved using a simple suboptimal strategy and compare it with the polling model. We consider a wireless network with $N = 3$ users and with channel probabilities $p_i = 0.2$.
For the example considered in the figure, we had considered all static schedules with multicast combinations along with the simple schedules for the polling model. From the figure, we note that the suboptimal strategy is a reasonable approximation and enhances the rate region as good as the optimal strategy (not reported in the figure).
\begin{figure}
\begin{center}
\includegraphics[scale=0.5]{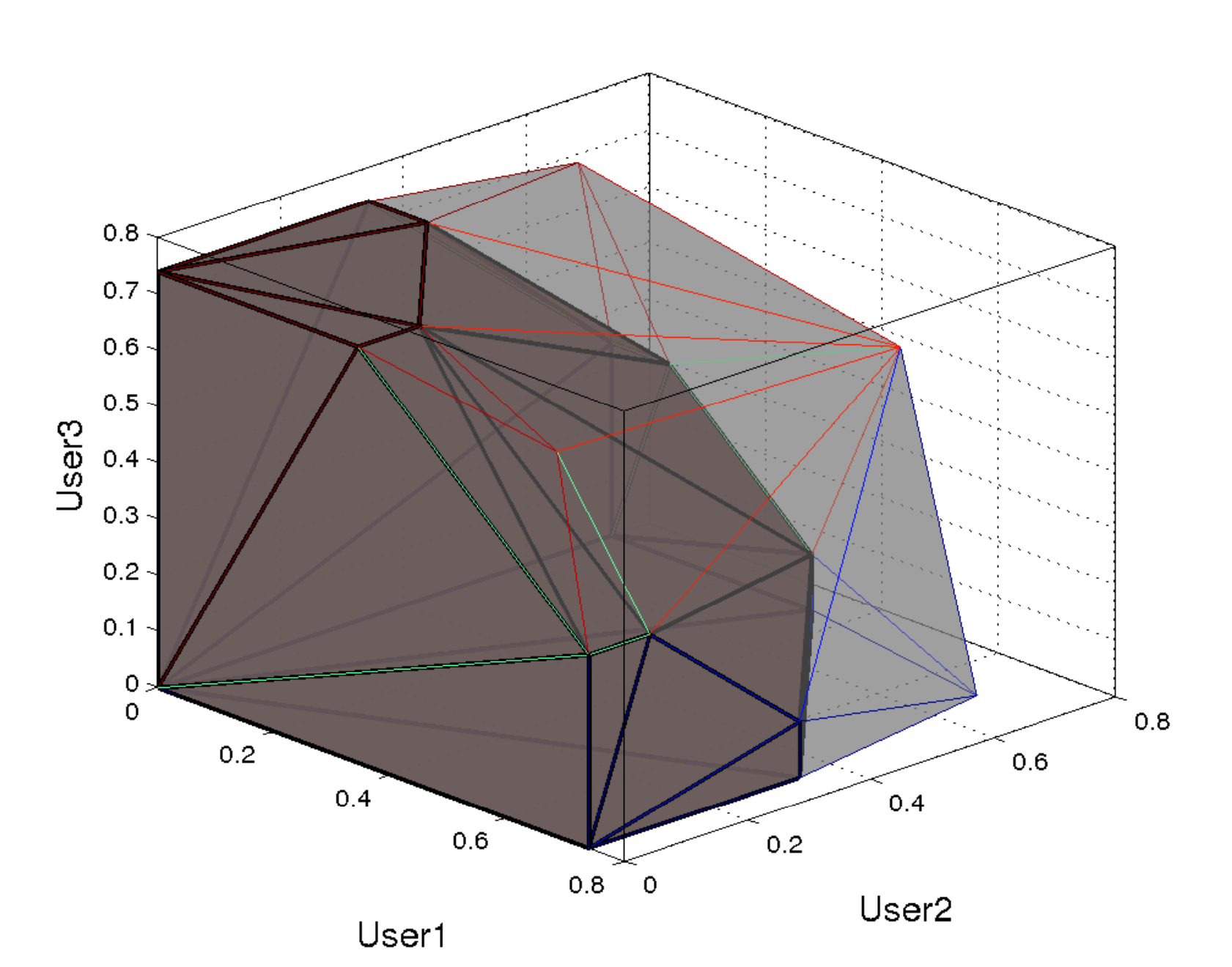}
\caption{The feasible throughput vectors of polling model (darker shade) and a suboptimal extended network model (lighter shade).  We consider a wireless network with $N = 3$ users and channel probabilities $p_i = 0.2$.}
\label{figure_rrsuboptimal}
\end{center}
\end{figure}

The following discussion characterizes the enhancement to the rate region with the extended network model.
\begin{lemma}
The extended network model can enhance the rate region of the wireless network by a factor of $N$.
\end{lemma}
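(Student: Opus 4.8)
The plan is to show that $N$ is a \emph{tight} bound on the gain: the rate region of the extended model is always contained in $N$ times the polling rate region, and this factor is approached arbitrarily closely by a suitable family of channels. As in the previous lemmas, it suffices to argue for $\tau = 1$, since the channel and the arrivals are i.i.d.\ across frames. For $\tau=1$ the polling rate region of \cite{prk1} is ${\mathcal C} = \{(d_1,\dots,d_N)\ge 0 : \sum_{i=1}^N d_i/p_i \le 1\}$; equivalently, it is the set of vectors $(\beta_1 p_1,\dots,\beta_N p_N)$ with $\beta_i\ge 0$ and $\sum_i \beta_i \le 1$, obtained by time-sharing the single poll among the users.

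\textbf{Upper bound.} First I would note that in the extended model a user can succeed in a slot only when its own channel is ON, so every feasible throughput vector satisfies $d_i \le p_i$ for all $i$, and hence $\sum_i d_i/p_i \le N$. Given such a vector, set $\beta_i = d_i/(N p_i) \le 1/N$; then $\sum_i \beta_i \le 1$, so $(\beta_1 p_1,\dots,\beta_N p_N) = (d_1/N,\dots,d_N/N)\in{\mathcal C}$. Thus scaling any feasible vector of the extended model down by $N$ lands it in the polling region, so the enhancement cannot exceed a factor of $N$.

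\textbf{Achievability.} Next I would take the symmetric network $p_i = p$ and compare the two models along the direction $(1,\dots,1)$. Scheduling all $N$ users in one multicast combination, user $i$ is the unique responder exactly when its channel is ON and the other $N-1$ channels are OFF, giving the symmetric throughput $p(1-p)^{N-1}$ per user --- the $k=N$ specialization of the vector in~\eqref{N_user_pt}. In the polling model the largest symmetric throughput is $p/N$, from the active constraint $Nd/p\le 1$. The ratio $N(1-p)^{N-1}$ tends to $N$ as $p\to 0$, so for any $\varepsilon>0$ there is a channel probability for which the extended model supports, in that direction, a throughput exceeding $N-\varepsilon$ times the best the polling model achieves. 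Together with the upper bound, this fixes the enhancement factor at $N$ --- a supremum approached as $p\to 0$ but not attained for $p>0$, since $\sum_i d_i/p_i = N$ would force $d_i = p_i$ for all $i$, which is impossible once two channels can be simultaneously ON.

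The mildly delicate point --- more bookkeeping than mathematics --- is making precise what ``enhancement by a factor of $N$'' means: here it is the supremum, over channel parameters and over directions in the positive orthant, of the ratio between the largest feasible scalar multiples in the two models, and this supremum equals $N$ without being achieved. Everything else (the per-slot bound $d_i\le p_i$, the simplex description of ${\mathcal C}$ at $\tau=1$, and the symmetric computation) is routine.
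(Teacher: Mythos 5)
Your achievability argument is exactly the paper's proof: for $\tau=1$ and symmetric channels $p_i=p$, the all-$N$ multicast yields per-user throughput $p(1-p)^{N-1}$ (system throughput $Np(1-p)^{N-1}$) against $p/N$ per user ($p$ total) for polling, and the ratio $N(1-p)^{N-1}\to N$ as $p\to 0$. The converse you add --- that $d_i\le p_i$ in the extended model forces any feasible vector, scaled by $1/N$, into the polling simplex, so the factor $N$ is a tight supremum --- is correct but goes beyond the paper, whose proof consists solely of the symmetric small-$p$ computation.
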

\begin{proof}
Consider the case when $\tau=1$ and the probabilities of ON of all the users is equal to $p$. The system throughput for the polling model in the frame is $p$. But in the case of extended model, the system throughput would be $N p (1-p)^{N-1} \approx N p$ for sufficiently small $p$.
\end{proof}

The extended network model can enhance the network performance even if the number of users that can be combined is limited.
\begin{enumerate}
\item Consider any schedule $i_1/i_2/\cdots/i_k/\cdots/i_m/\cdots/i_N$. The rate region can be enhanced just by combining users $i_k$ and $i_m$ by adding the schedule $i_1/i_2/\cdots/(i_k + i_m)/\cdots/(i_k + i_m)^c/\cdots/i_N$.
\item The improvement in the performance will actually be a function of the actual schedule of operation. The performance enhancement will be the most if the users that are combined are scheduled early in the frame.
\end{enumerate}

\section{Performance Evaluation}
In this section, we report the performance of the wireless network with the multiple access strategy and also compare it with the polling model studied in \cite{prk1}.

\subsection{Throughput and Utility Optimization}
The description of the rate region permits us to propose simple throughput optimal and utility maximizing strategies for the wireless network. In Figure~\ref{figure_maxwt}, by stabilizing the virtual queues (that holds the difference between the number of packets generated and packets transmitted successfully, see \cite{prk1}), we achieve the desired throughput of the wireless users from the extended rate region. In Figure~\ref{figure_PF}, we plot the performance of a gradient scheduler that achieves proportional fairness for the network with the multiple access technique.
\begin{figure}
\begin{center}
\includegraphics[scale=0.5]{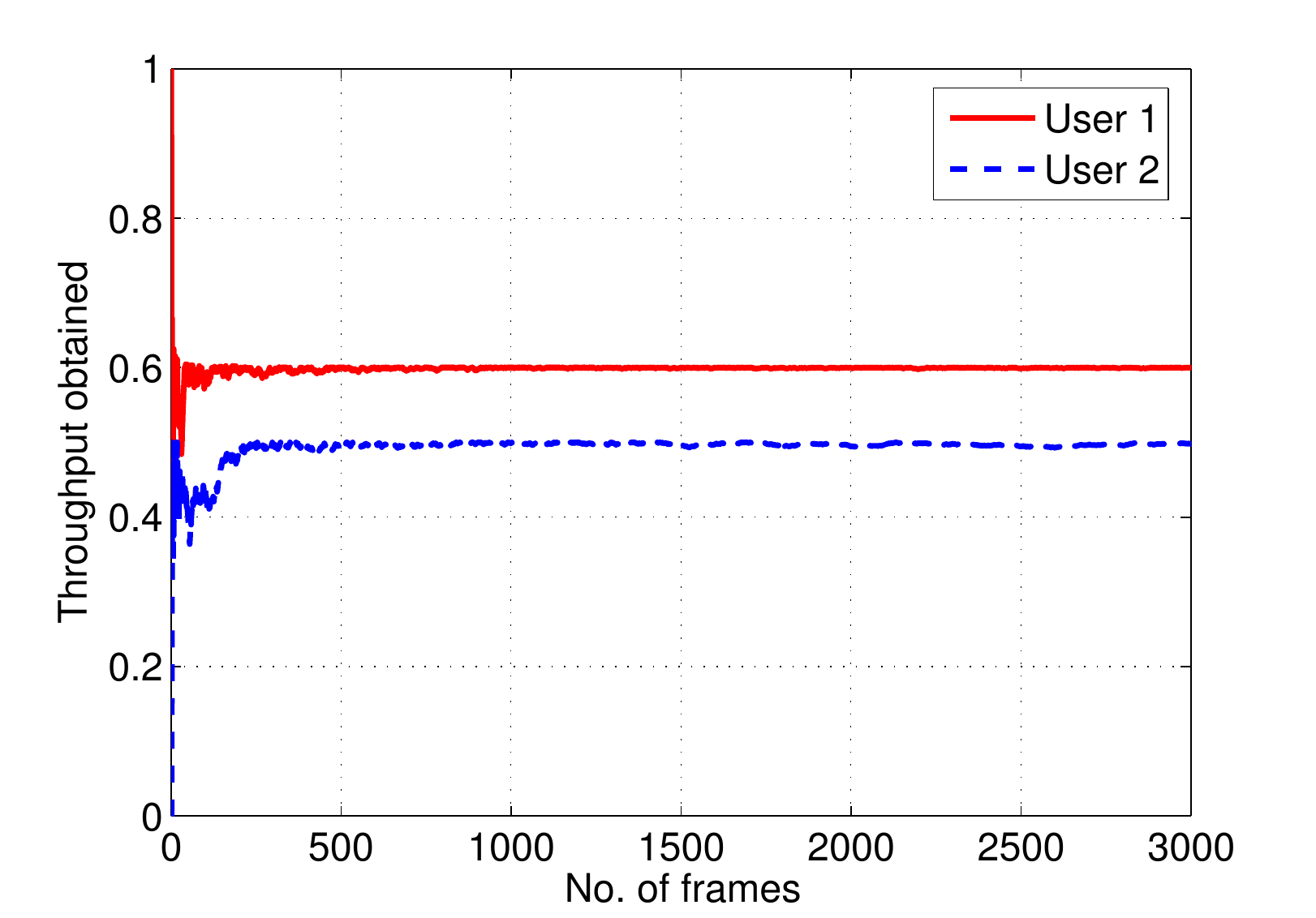}
\caption{Throughput obtained using a virtual queue stabilizing strategy for a wireless network with $N =2$ users, $\tau = 4$ slots and $(p_1,p_2) = (0.3,0.2)$. The minimum throughput requirement is $(0.6,0.5)$ packets per slot. }
\label{figure_maxwt}
\end{center}
\end{figure}
\begin{figure}
\begin{center}
\includegraphics[scale=0.48]{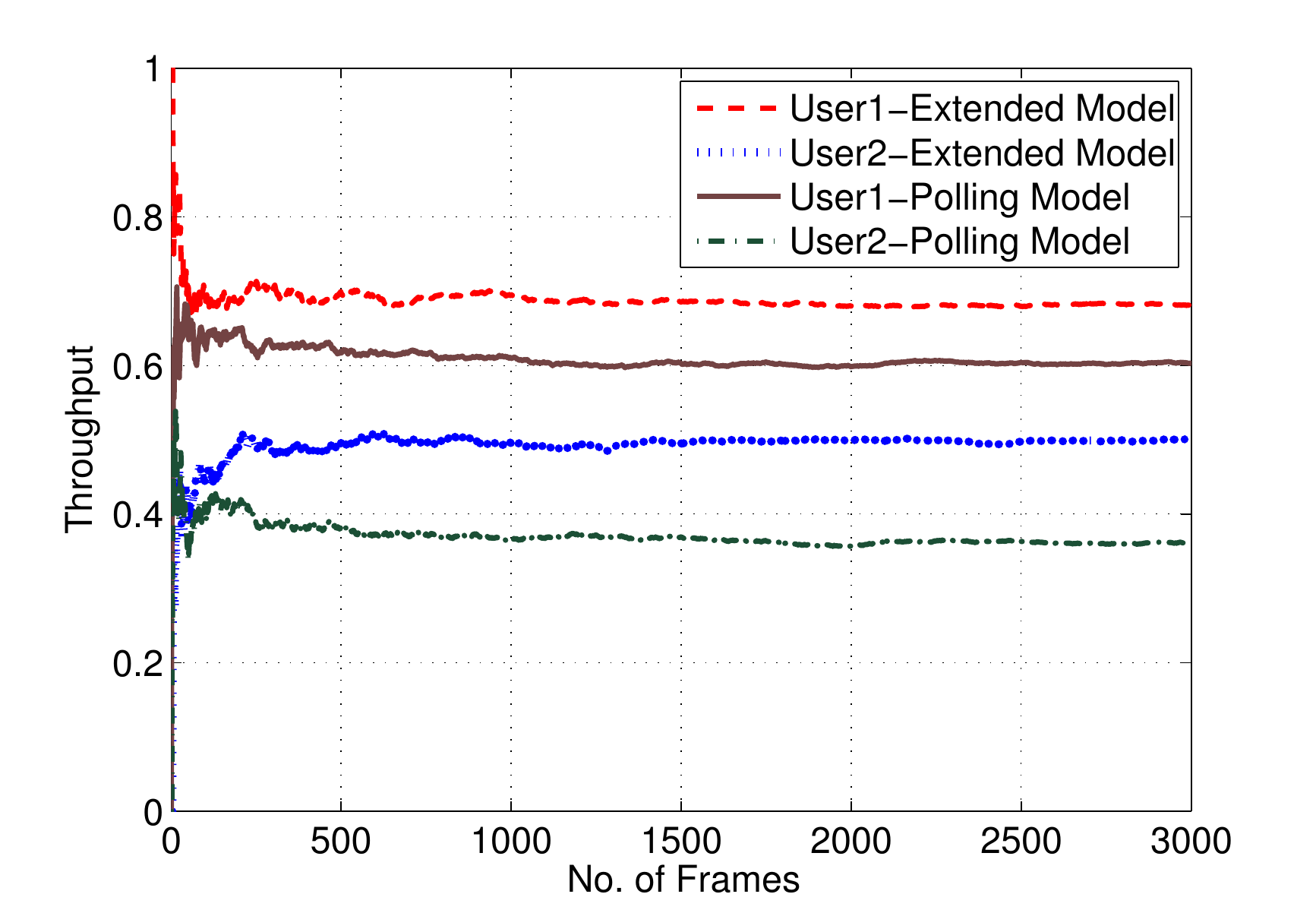}
\caption{Throughput of a proportional fair scheduler implemented using a gradient descent algorithm for a wireless network with $N = 2$ users, $\tau = 4$ slots and $(p_1,p_2) = (0.3,0.2)$. We have also plotted the performance of the proportional fair scheduler for the polling based contention model studied in \cite{prk1}.}
\label{figure_PF}
\end{center}
\end{figure}

\subsection{Cellular Drop}
In Figure~\ref{figure_cdfs}, we report the performance of the multiple access contention model in comparison with the polling model (from \cite{prk1}) for a typical cellular setup. We consider $30$ users deployed randomly and Uniformly in a circle of radius $1$ Km with the base station at the center. The transmit power of the users is fixed at $1$ Watt, the path loss coefficient is $4$ and the SNR threshold value for communication is assumed to be $10$ dBm. We assume that the users experience a Rayleigh fading wireless channel with mean $1$. The frame length is taken as 30 slots. The scheduler seeks to achieve the proportionally fair operating point in the rate region. In Figure~\ref{figure_cdfs}, we plot the CDF of the throughput received by the users for the multiple access scheme and compare it with the polling model. We have considered a suboptimal implementation (with limited number of schedules). We note from the plot that there is considerable improvement in the throughput of the users even with the suboptimal implementation.
%In evaluating using the extended model, only schedules using the multicast of $2$ users have been used which still result in better performance than the old model.
\begin{figure}
\begin{center}
\includegraphics[scale=0.53]{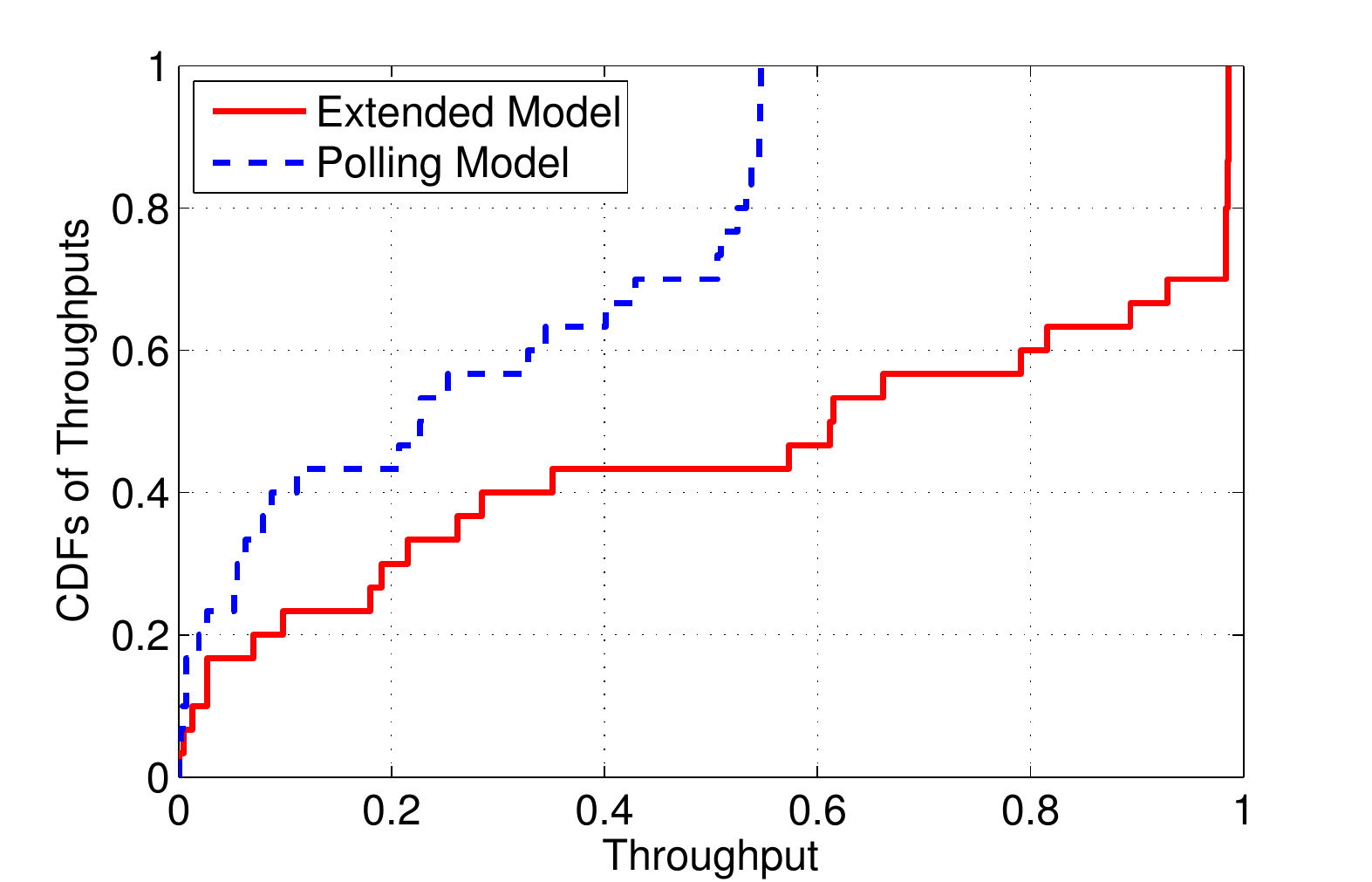}
\caption{CDFs of throughput of the users in a cellular setup for the multiple access contention model and the polling model.}
\label{figure_cdfs}
\end{center}
\end{figure}

\section{Conclusion}
In this work, we have proposed a rate region viewpoint to study a QoS problem in a fading wireless channel with real-time traffic and hard delay constraints. Using multiple access for contention, we report an improvement in the performance of the wireless network in comparison with polling. We characterize the feasible throughput vectors for the wireless network using the rate region viewpoint. We study simple admission control policy and utility maximizing strategies and have evaluated the network performance using simulations.

\bibliography{references}

% Generated by IEEEtran.bst, version: 1.13 (2008/09/30)
\begin{thebibliography}{10}
\providecommand{\url}[1]{#1}
\csname url@samestyle\endcsname
\providecommand{\newblock}{\relax}
\providecommand{\bibinfo}[2]{#2}
\providecommand{\BIBentrySTDinterwordspacing}{\spaceskip=0pt\relax}
\providecommand{\BIBentryALTinterwordstretchfactor}{4}
\providecommand{\BIBentryALTinterwordspacing}{\spaceskip=\fontdimen2\font plus
\BIBentryALTinterwordstretchfactor\fontdimen3\font minus
  \fontdimen4\font\relax}
\providecommand{\BIBforeignlanguage}[2]{{%
\expandafter\ifx\csname l@#1\endcsname\relax
\typeout{** WARNING: IEEEtran.bst: No hyphenation pattern has been}%
\typeout{** loaded for the language `#1'. Using the pattern for}%
\typeout{** the default language instead.}%
\else
\language=\csname l@#1\endcsname
\fi
#2}}
\providecommand{\BIBdecl}{\relax}
\BIBdecl

\bibitem{prk1}
I.-H. Hou, V.~Borkar, and P.~R. Kumar, ``A theory of qos for wireless,'' in
  \emph{INFOCOM}, 2009, pp. 486--494.

\bibitem{hou_2010}
I.-H. Hou and P.~R. Kumar, ``Scheduling heterogeneous real-time traffic over
  fading wireless channels,'' in \emph{INFOCOM}, 2010, pp. 1--9.

\bibitem{het_delay_srikant}
J.~J. Jaramillo, R.~Srikant, and L.~Ying, ``Scheduling for optimal rate
  allocation in ad hoc networks with heterogeneous delay constraints,''
  \emph{IEEE JSAC}, vol.~29, no.~5, pp. 979 --987, 2011.

\bibitem{neely_superfast}
M.~J. Neely, ``Super-fast delay tradeoffs for utility optimal fair scheduling
  in wireless networks,'' \emph{IEEE JSAC}, vol.~24, no.~8, pp. 1489 --1501,
  2006.

\bibitem{shakkottai_2002}
S.~Shakkottai and R.~Srikant, ``Scheduling real-time traffic with deadlines
  over a wireless channel,'' \emph{Wireless Networks}, vol.~8, 2002.

\bibitem{elsayed_2006}
K.~M. Elsayed and A.~K. Khattab, ``Channel-aware earliest deadline due fair
  scheduling for wireless multimedia networks,'' \emph{Wireless Personal
  Communications}, vol.~38, no.~2, pp. 233--252, 2006.

\bibitem{exponential}
S.~Shakkottai and A.~Stolyar, ``Scheduling for multiple flows sharing a
  time-varying channel: The exponential rule,'' \emph{AMS Translations}, 2000.

\bibitem{fastcsma_eryilmaz}
B.~Li and A.~Eryilmaz, ``A fast-csma algorithm for deadline-constrained
  scheduling over wireless fading channels,'' \emph{ArXiv: 1203.2834}, 2012.

\bibitem{prk_utility_max}
I.-H. Hou and P.~R. Kumar, ``Utility maximization for delay constrained qos in
  wireless,'' in \emph{INFOCOM}, 2010, pp. 1--9.

\bibitem{hou_utility}
I.-H. Hou and et~al, ``Utility-optimal scheduling in time-varying wireless
  networks with delay constraints,'' in \emph{MobiHoc}, 2010, pp. 31--40.

\bibitem{ganz2003}
K.~Wongthavarawat and A.~Ganz, ``Packet scheduling for qos support in ieee
  802.16 broadband wireless access systems,'' \emph{International Journal of
  Communication Systems}, vol.~16, no.~1, pp. 81--96, 2003.

\bibitem{kang2002}
S.~H. Kang and A.~Zakhor, ``Packet scheduling algorithm for wireless video
  streaming,'' in \emph{International Packet Video Workshop}, 2002.

\bibitem{maxwt}
L.~Tassiulas and A.~Ephremides, ``Stability properties of constrained queueing
  systems and scheduling policies for maximum throughput in multihop radio
  networks,'' \emph{IEEE Trans on A.C.}, vol.~37, 1992.

\bibitem{aditya}
A.~Gopalan, C.~Caramanis, and S.~Shakkottai, ``On wireless scheduling with
  partial channel-state information,'' \emph{IEEE Transactions on Information
  Theory}, vol.~58, no.~1, pp. 403 --420, 2012.

\bibitem{polymatroid}
D.~N.~C. Tse and S.~V. Hanly, ``Multiaccess fading channels part - i -
  polymatroid structure, optimal resource allocation and throughput
  capacities,'' \emph{IEEE Trans on I.T.}, vol.~44, no.~7, pp. 2796 --2815,
  1998.

\bibitem{neely_delay}
M.~J. Neely, ``Delay analysis for max weight opportunistic scheduling in
  wireless systems,'' in \emph{Allerton}, 2008, pp. 683 --691.

\end{thebibliography}
\bibliographystyle{IEEEtran}

\end{document}